\begin{document}
\newcommand{\op}[1]{#1}
\newcommand{\ket}[1]{\left| #1 \right\rangle}
\newcommand{\bra}[1]{\left\langle #1 \right|}
\newcommand{\braket}[2]{\left\langle #1 | #2 \right\rangle}
\newcommand{\braopket}[3]{\bra{#1}#2\ket{#3}}
\newcommand{\proj}[1]{| #1\rangle\!\langle #1 |}
\newcommand{\expect}[1]{\left\langle#1\right\rangle}
\newcommand{\Entropy}{H}
\newcommand{\KL}[2]{S\left(#1\|#2\right)}
\newcommand{\Tr}{\mathrm{Tr}}
\newcommand{\Rho}{P}
\def\Id{1\!\mathrm{l}}
\newcommand{\cM}{\mathcal{M}}
\newcommand{\cR}{\mathcal{R}}
\newcommand{\Rhat}{\hat{\mathcal{R}}}
\newcommand{\Vbar}{\overline{V}}
\newcommand{\cE}{\mathcal{E}}
\newcommand{\cL}{\mathcal{L}}
\newcommand{\cH}{\mathcal{H}}
\newcommand{\cU}{\mathcal{U}}
\newcommand{\cP}{\mathcal{P}}
\newcommand{\reals}{\mathbb{R}}
\newcommand{\grad}{\nabla}
\newcommand{\rhohat}{\hat{\op\rho}}
\newcommand{\rhoMLE}{\rhohat_\mathrm{MLE}}
\newcommand{\rhotomo}{\rhohat_\mathrm{tomo}}
\newcommand{\rhotrue}{\rho}
\newcommand{\pvec}[1]{{\bf #1}}
\newcommand{\pMLE}{\pvec{p}_{\mathrm{MLE}}}
\newcommand{\pHMLE}{\pvec{p}_{\mathrm{H}}}
\newcommand{\diff}{\mathrm{d}\!}
\newcommand{\pdiff}[2]{\frac{\partial #1}{\partial #2}}
\def\FCW{1.0\columnwidth}
\def\HCW{0.55\columnwidth}
\def\TPW{0.33\textwidth}

\newtheorem{lemma}{Lemma}
\newtheorem{theorem}{Theorem}
\newtheorem{definition}{Definition}

\graphicspath{{./}{Figures/}}

\title{Robust error bars for quantum tomography}

\author{Robin Blume-Kohout}
\affiliation{Theoretical Division (T-4/CNLS), MS-B258, Los Alamos National Laboratory, Los Alamos, NM 87545}
\email{robin@blumekohout.com}

\maketitle

\textbf{In quantum tomography \cite{Tomography}, a quantum state or process is estimated from the results of measurements on many identically prepared systems.  Tomography can never identify the state ($\rho$) or process ($\cE$) exactly.  Any point estimate is necessarily ``wrong'' -- at best, it will be close to the true state.  Making rigorous, reliable statements about the system requires \emph{region estimates}.  In this article, I present a procedure for assigning \emph{likelihood ratio (LR) confidence regions}, an elegant and powerful generalization of error bars.  In particular, LR regions are almost optimally powerful -- i.e., they are as small as possible.}

Quantum information processing relies on quantum hardware, including memory qubits and unitary or nearly-unitary quantum gates.  These individual components must perform their allotted transformations with very high precision, especially for fault-tolerant quantum computing.  The methods used to characterize and validate quantum devices are known, collectively, as \emph{quantum tomography}.  Tomography usually involves repeated independent measurements on $N$ identically-prepared systems (referred to hereafter as ``standard tomography''), but can also involve collective measurements on all $N$ copies.  Because state and process tomography are mathematically equivalent, this paper will focus on state tomography for the sake of clarity, with the understanding that all results can be extended straightforwardly to processes\footnote{This is done by identifying a quantum process $\cE$ with a bipartite state $\rho(\cE) = \left(\Id\otimes\cE\right)[\proj{\Psi}]$, where $\ket{\Psi}$ is a maximally entangled state.  The differences between state and process tomography are entirely in implementation, and they all pertain to \emph{gathering} data, not to its analysis.}.

Tomography cannot identify $\rho$ (the state produced by a quantum device) \emph{exactly}, for precisely the same reason that flipping a coin $N$ times cannot reveal its bias exactly.  Any point estimate $\rhohat$ has precisely zero probability of coinciding exactly with the true $\rho$, for there are infinitely many other states arbitrarily close to $\rhohat$ and equally consistent with the data.  To make a tomographic assertion about the device that \emph{is} true -- or at least true with high probability -- we must report a \emph{region} of states or processes (vis. Fig. \ref{fig:RegionEstimate}).

\begin{figure}[t]
\begin{centering}
\includegraphics[width=8cm]{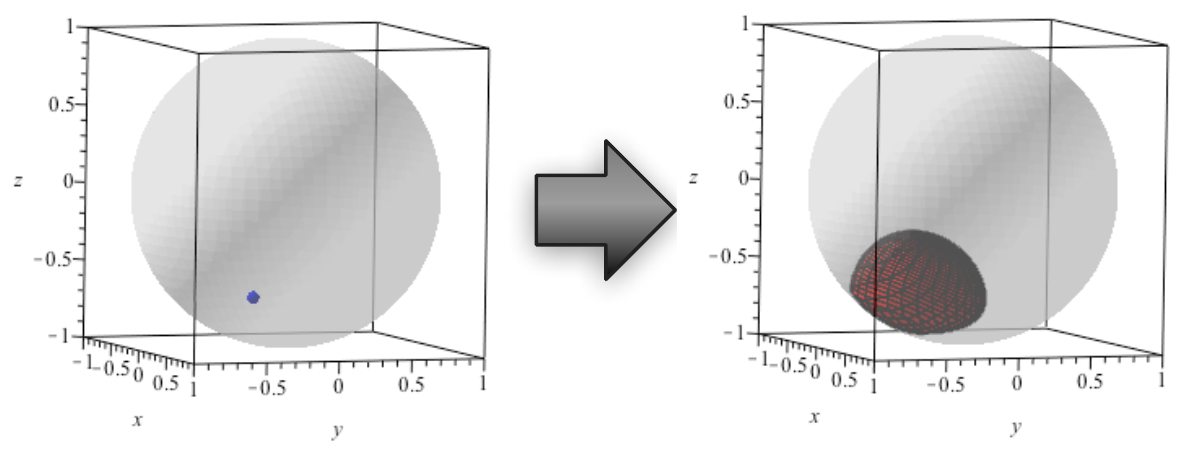}
\par\end{centering}
\caption{\emph{Point estimators}, like the maximum likelihood estimate $\rhoMLE$ shown on the left, cannot provide meaningful and rigorous statements about the true (but unknown) state $\rho$.  But if we replace point estimators with \emph{region estimators}, like the likelihood-ratio confidence region shown on the right, then the region $\Rhat$ defines an assertion -- ``$\rho$ lies within $\Rhat$ with 90\% certainty'' -- that is rigorously valid.  The estimates shown here came from simulated measurements on 60 copies of a single-qubit state, with 20 measurements each of $\sigma_x,\sigma_y,\sigma_z$ yielding +/- counts of 7/13, 9/11, and 3/17 (respectively).}
\label{fig:RegionEstimate}
\end{figure}

Such regions are often constructed by attaching \emph{error bars} to a point estimate.  In quantum tomography, this approach suffers several drawbacks, some of which are illustrated in Fig. \ref{fig:CircleRegion}.  Na\"ive error bars define an ellipsoidal shape (arbitrary), centered at the point estimate (suboptimal), which may include many unphysical states (inefficient).  Worst of all, it is generally impossible to assign this ellipsoid any rigorous meaning -- e.g., ``The true state is within it, with probability at least 99\%.''  The same problem applies to the other method used to date, \emph{bootstrapping}\cite{HomeScience09} -- which means generating a host of simulated datasets $\{D_k\}$ (either by resampling the real data, or by simulating measurements on a point estimate $\rhohat$), then reporting the variance of the corresponding point estimates $\{\rhohat_k\}$.  The underlying problem is that bootstrapping and na\"ive error bars both represent \emph{standard errors} -- the variance of a point estimator.  Unfortunately, the point estimators used in quantum tomography are all \emph{biased}, and standard errors for biased estimators do not reliably represent uncertainty\footnote{To see this, consider a simple biased point estimator that assigns $\rhohat=\Id/d$ no matter what data is observed.  Clearly, the variance of $\rhohat(D)$ is zero -- and, just as clearly, this implies nothing about our uncertainty about the true $\rho$!  While obviously extreme and even absurd, this estimator nonetheless demonstrates that the variance of biased estimators cannot be relied upon to describe uncertainty.  In real-world tomography, the maximum likelihood estimator $\rhoMLE$ is biased by the positivity constraint $\rho\geq0$, and this \emph{can} lead to significant underestimation of uncertainty.} about the true $\rho$.

\begin{figure}[b]
\begin{centering}
\includegraphics[width=8cm]{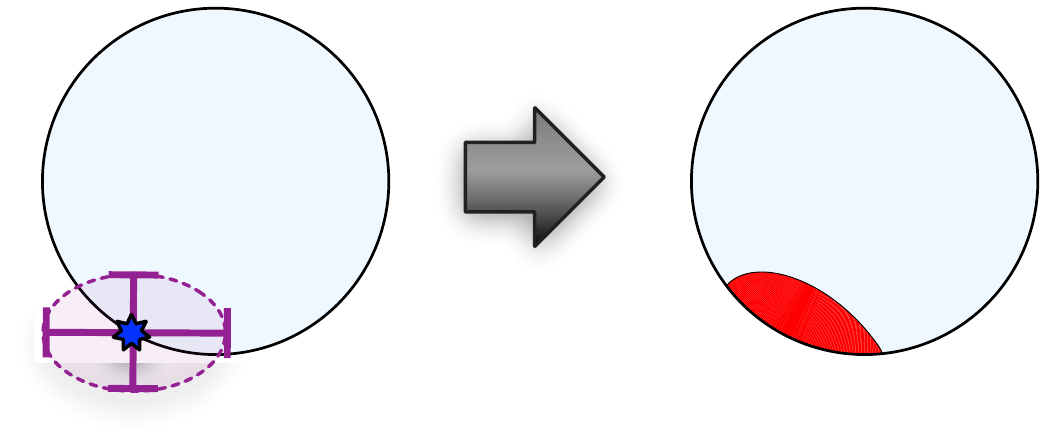}
\par\end{centering}
\caption{General region estimates -- adapted to the data, and constructed so as to minimize volume -- can be far more powerful, useful, and reliable than traditional ``error bars''.  As illustrated here, a valid confidence region need not be: (i) ellipsoidal or rectangular, (ii) centered at a point estimate, or (iii) aligned with the axes defined by whatever observables were measured.  The figure on the right shows a cross-section of a 1-qubit LR confidence region, while on the left the smallest traditional error bars with the same coverage probability are shown.  The LR region is noticeably smaller, and includes only valid states.  Although in this case, the LR region could be reasonably approximated by the intersection between the error ellipsoid and the Bloch sphere, this is not always the case.}
\label{fig:CircleRegion}
\end{figure}

Happily, all of these issues can be resolved with a remarkably simple construction.  \emph{Likelihood ratio (LR) confidence regions} (see Fig. \ref{fig:Examples} for some examples) generalize the notion of error bars, providing data-adapted (not necessarily ellipsoidal) regions that:
\begin{enumerate}
\item contain the true state with guaranteed, high, and user-specified probability;
\item are, on average, smaller (and thus more powerful) than almost any other construction; and
\item are simple to define and construct.
\end{enumerate}

\begin{definition} Given observed data $D$, the \textbf{likelihood} is a function on states given by $\cL(\rho) = Pr(D|\rho)$.  The \textbf{log likelihood ratio} is a function on states given by $\lambda(\rho) = -2\log\left[ \cL(\rho) / \max_{\rho'}\cL(\rho')\right]$.  Given data $D$, the \textbf{likelihood ratio region} with confidence $\alpha$ is $\Rhat_\alpha(D) = \{\mathrm{all\ }\rho\mathrm{\ such\ that\ }\lambda(\rho) < \lambda_\alpha\}$, where $\lambda_\alpha$ is a constant (see below) that depends on the desired confidence $\alpha$ and the Hilbert space dimension $d$. \label{def:LR}
\end{definition}

It should be obvious that the threshold value $\lambda_\alpha$ plays a critical role in this construction.  Increasing $\lambda_\alpha$ increases the size of the LR region $\Rhat_\alpha$, which in turn increases the estimator's \emph{coverage probability} -- the probability that $\Rhat(D)$ will contain $\rho$ -- but reduces its \emph{power} (since large regions imply less about $\rho$).  So $\lambda_\alpha$ should be set to the smallest value that ensures coverage probability at least $\alpha$.  This optimal value depends on $\alpha$ and the size of the system we are measuring.  It is hard to compute exactly, but two upper bounds are provided in Eq. \ref{eq:Fbound} and Lemma \ref{lem:bound}.  Either will guarantee coverage probability at least $\alpha$, at the cost of slightly increasing the regions' size.

\begin{figure}[t]
\begin{centering}
\includegraphics[width=8cm]{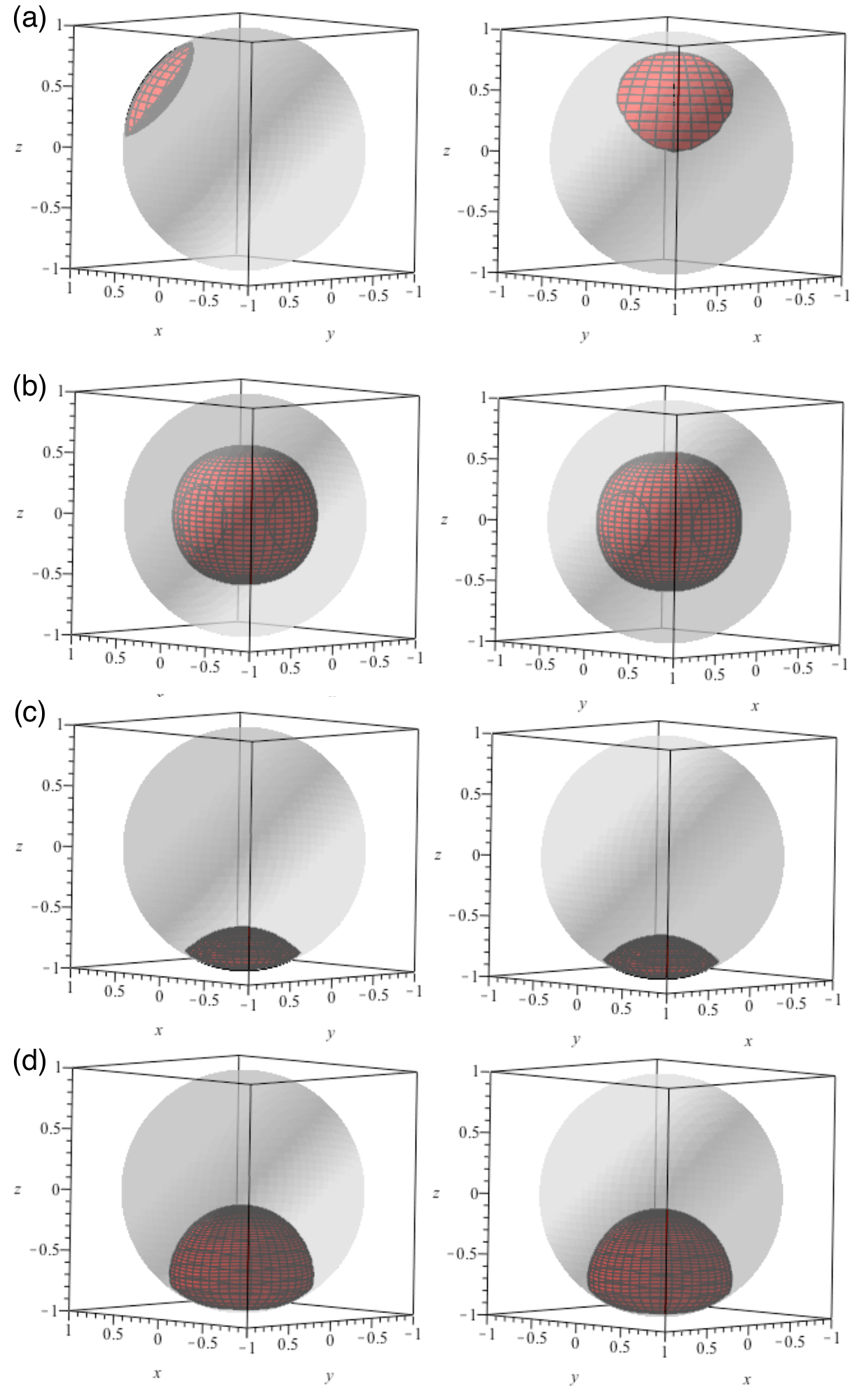}
\par\end{centering}
\caption{Four examples of the shapes that LR regions can take.  Each example is a 90\% confidence region for a qubit, based on 60 measurements divided equally among the three Pauli observables.  Rows (a-d) correspond to four distinct datasets.  Left and right columns are different views of the same region.}
\label{fig:Examples}
\end{figure}

The remainder of this paper attempts to answer three natural questions, in order of increasing technicality.  First, ``What is a confidence region, and how does it generalize `error bars'?''  Next, ``Why is the LR construction an especially good one?''  Finally, ``How do we choose the threshold?''  The concluding discussion section addresses a few other questions, especially the relationship to related work by Christandl and Renner\cite{Christandl}, and how to use and describe LR regions.

\section{Region estimators and confidence regions}

Error bars around a point estimate define a region estimate, but region estimates do not need to be associated with a point estimate.  After seeing the data ($D$), we can assign a region $\Rhat(D)$ of whatever shape and size is necessary to achieve our goals.  So what are these goals?

First and foremost is \emph{coverage probability}.  By assigning a region, we assert that the unknown $\rho$ is within it.  This had better be true with very high probability.  Coverage probability ($\alpha$) is the probability that $\Rhat(D)$ does indeed contain $\rho$.  It would be very satisfying indeed if we could assert ``Given the observed data, the probability that $\rho$ is in $\Rhat(D)$ is $\alpha$,'' i.e.
$$
Pr(\rho\in\Rhat(D)|D) \geq \alpha.
$$
Unfortunately, this assertion\footnote{This condition defines a Bayesian \emph{credible region}.} \emph{requires} assigning a \emph{prior} probability to ``$\rho\in\Rhat(D)$''.  Different agents (e.g., a scientist, a skeptical reader, and a funding agency) will generally disagree about this prior, and therefore about $Pr(\rho\in\Rhat(D)|D)$.  

So, instead, we make a subtly different assertion:  ``The region $\Rhat(D)$ that we assign will contain $\rho$ with probability at least $\alpha$,'' i.e. 
\begin{equation}
Pr(\rho\in\Rhat(D))\geq\alpha.
\end{equation}
This assertion is made \emph{before} the data $D$ are observed.  Once we know $D$, $\Rhat(D)$ is fixed, and the most that we can say is ``This region was obtained by a procedure that `works' almost always (i.e., with probability $\geq\alpha$)''.  Now (after the estimate), $\alpha$ is not the probability of success -- for success is no longer a random variable.  Instead, $\alpha$ quantifies our \emph{confidence} in the estimate, and so an estimation procedure satisfying this condition is a \emph{confidence region} estimator.

Confidence is a property of the entire estimator -- the map from data to regions -- rather than of a particular region estimate.  It depends on the unknown state -- but, \emph{mirabile dictu}, we can place relatively tight prior-\emph{independent} bounds on it,
$$Pr(\rho\in\Rhat(D))\  \geq\  \min_\rho{ Pr(\rho\in\Rhat(D)|\rho) }.$$
A confidence region estimator with confidence $\alpha$ satisfies
$$\Pr(\rho\in\Rhat(D)|\rho) \geq \alpha\ \forall\ \rho.$$
It's important to understand that confidence regions do not provide probabilistic statements about any single run of the experiment.  Once the data are taken, the estimator either succeeded or failed, and there is no way to assign a probability to its success without choosing a prior.  In any given experiment, what can be said is ``We applied a technique which is guaranteed to yield a region containing the true $\rho$ at least $\alpha$ of the time -- no matter what the unknown $\rho$ is.''

\section{Optimality}

Confidence regions are a basic statistical construct, especially for scalar parameters (where they are known as confidence \emph{intervals}).  Yet even for 1-dimensional intervals, there exist many distinct constructions, and no consensus on the ``best'' choice.  Note that designing a high-confidence region estimator isn't hard.  For example, the estimator $\Rhat(D) = \{\mathrm{all\ states}\}$ has coverage probability $\alpha=1$.  The challenge is to design one that is \emph{powerful} -- i.e., assigns small regions (which correspond to powerful hypotheses, because they rule out many states).

The likelihood ratio construction in Definition \ref{def:LR} was introduced in 1989 in the context of particle physics by Feldman and Cousins \cite{FeldmanCousins}, and appears to have been part of statistical folklore before that.  What has \emph{not} appeared to date is a compelling argument why LR regions are particularly good.  In this section, I provide such a justification, comprising: (1) a proof that the most powerful confidence region estimators are \emph{probability ratio} (PR) estimators (a similar result was proven by Evans et al\cite{Evans}; the treatment here is self-contained), and (2) a heuristic argument that, among the family of PR estimators, LR estimators have nearly-optimal worst-case behavior.

First, we need to quantify the power of any given region estimator $\Rhat(\cdot)$.  Smaller regions are clearly more powerful, and in this work I will quantify a region's power by its \emph{volume},
\begin{equation}
V(\Rhat) = \int_{\rho\in\Rhat}{\diff\rho}.\label{eq:defVol}
\end{equation}
Choosing a particular volume measure $\diff\rho$ could be controversial.  Remarkably, the construction in Definition \ref{def:PR} is optimal for \emph{any} measure $\diff\rho$!

Now, the volume of the assigned region is itself a random variable, depending on the data.  Averaging over datasets yields an expected volume,
$$\Vbar(\rho) = \sum_D{Pr(D|\rho)V(\Rhat(D))},$$
which is a function of $\rho$.  Since $\rho$ is (by definition) unknown, we can quantify the estimator's performance either by \emph{worst-case} (maximum) or \emph{average} volume.  To average, we must choose a measure $\mu = P(\rho)\diff\rho$ over $\rho$ (which need \emph{not} be related in any way to $\diff\rho$).  Optimal average performance is achieved by that measure's \emph{Bayes estimator}\cite{Robert2007}.

\begin{definition}
Given a cost function $V$, the \textbf{Bayes estimator} for a given measure $\mu$ (over states) is the estimator with the smallest average expected cost w/r.t. $\mu$.
\end{definition}

But the Bayes estimator for one measure might have very bad performance for another measure.  So if we are not sure what measure to choose, we have an alternative:  choose an estimator that minimizes worst-case performance, $\max_\rho{\Vbar(\rho)}$.  This defines the \emph{minimax} estimator\cite{Robert2007}.  Happily (and perhaps surprisingly), these two approaches are intimately related by a basic theorem of decision theory:

\begin{theorem} \textbf{(Minimax-Bayes duality\cite{Robert2007,Kempthorne87})} The minimax estimator (for a given cost function) is, under mild regularity conditions, also the Bayes estimator for some measure $\mu$, known as the \emph{least favorable prior}.
\end{theorem}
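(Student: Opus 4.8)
The plan is to recognize this as the classical minimax theorem of decision theory and to prove it by recasting estimation as a two-player zero-sum game. One player (the statistician) chooses a region estimator $\Rhat$ from the class $\mathcal{D}$ of all estimators; the adversary (Nature) chooses a measure $\mu$ from the set $\mathcal{M}$ of probability measures on states. The payoff is the average expected cost $r(\mu,\Rhat) = \int \Vbar_{\Rhat}(\rho)\,\diff\mu(\rho)$, where $\Vbar_{\Rhat}(\rho)$ is the expected cost of $\Rhat$ at $\rho$. The statistician's best guaranteed worst case is the minimax value $\inf_{\Rhat}\sup_\mu r(\mu,\Rhat) = \inf_{\Rhat}\sup_\rho \Vbar_{\Rhat}(\rho)$, while Nature's best guaranteed return is the maximin value $\sup_\mu \inf_{\Rhat} r(\mu,\Rhat)$, whose inner infimum is, by definition, the average expected cost of the Bayes estimator for $\mu$.

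Weak duality, $\sup_\mu\inf_{\Rhat} r \le \inf_{\Rhat}\sup_\mu r$, holds trivially, since for any fixed $\Rhat$ the worst-case cost dominates every averaged cost. The substance of the theorem is to promote this to an equality (strong duality), and here the key preparatory step is to enlarge $\mathcal{D}$ to include \emph{randomized} estimators. This convexifies the decision class and makes $r$ affine in each argument separately, so that $r$ is convex (indeed linear) in $\Rhat$ and concave (indeed linear) in $\mu$, over convex domains. For finite dimension $d$ the state space is compact, hence so is $\mathcal{M}$. Under these hypotheses I would invoke Sion's minimax theorem (von Neumann's suffices when the parameter space is finite) to conclude that a saddle point $(\Rhat^*,\mu^*)$ exists and that the minimax and maximin values coincide.

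The saddle-point inequalities $r(\mu,\Rhat^*)\le r(\mu^*,\Rhat^*)\le r(\mu^*,\Rhat)$ for all $\mu,\Rhat$ then deliver the conclusion in one stroke. The right-hand inequality states that $\Rhat^*$ minimizes the average expected cost against $\mu^*$, i.e. $\Rhat^*$ \emph{is} the Bayes estimator for $\mu^*$. The left-hand inequality, together with $\sup_\mu r(\mu,\Rhat^*) = \sup_\rho \Vbar_{\Rhat^*}(\rho)$, shows $\Rhat^*$ attains the minimax value and is therefore the minimax estimator. Finally, since $\mu^*$ maximizes the cost of its own Bayes rule, it is by definition the least favorable prior, and identifying this $\mu^*$ with the measure named in the statement finishes the proof.

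I expect the main obstacle to lie entirely in the ``mild regularity conditions''---specifically, the compactness and semicontinuity needed to apply the minimax theorem. Compactness of the state space is automatic for finite $d$ (it is a closed, bounded, convex subset of the Hermitian operators), but the estimator class is more delicate: a region estimator sends each data outcome to a subset of state space, and one must topologize the randomized estimators so that $\mathcal{D}$ is compact and $r$ is lower semicontinuous in $\Rhat$. The standard remedy is to argue on the convex \emph{risk set} of achievable cost functions, or equivalently on the weak-$*$ compact set of randomized rules, and to check lower semicontinuity of $\Vbar$; the pathological cases in which the relevant infimum or supremum is not attained are precisely those the regularity hypotheses are there to exclude.
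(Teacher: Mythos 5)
The paper does not actually prove this theorem: it is quoted as a standard result of statistical decision theory and attributed to the cited references (Robert; Kempthorne), so there is no in-paper argument to compare yours against. Your sketch is the standard game-theoretic proof of that result and is correct in outline: weak duality is indeed trivial, and the substance lies in (i) convexifying the decision class via randomized rules so that $r(\mu,\Rhat)$ is bilinear, (ii) securing the compactness and semicontinuity needed for a minimax theorem to yield an attained saddle point, and (iii) reading off from the saddle inequalities that $\Rhat^*$ is Bayes for $\mu^*$, that $\mu^*$ is least favorable, and that $\Rhat^*$ is minimax. Two points deserve explicit attention if you were to write this out in the paper's setting. First, the statistician's class $\mathcal{D}$ here is not the set of all region estimators but the convex subset satisfying the coverage constraint $\sum_{D\sim\rho}Pr(D|\rho)\geq\alpha$ for every $\rho$; this constraint is affine in a randomized rule, so convexity survives, but closedness of the constrained class must be verified alongside the compactness you already flag. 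Second, the identity $\sup_\mu r(\mu,\Rhat^*)=\sup_\rho \Vbar_{\Rhat^*}(\rho)$ needs point masses to be admissible priors and $\Vbar_{\Rhat^*}$ to be bounded and measurable --- which it is here, since the data space is finite and the total volume of the compact state space is finite. With those caveats, which are precisely the ``mild regularity conditions'' in the statement, your argument is essentially the proof the cited references give, and it is an acceptable substitute for the citation.
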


So to find the minimax estimator -- which is appealing because it has the best possible \emph{guaranteed} performance -- we will focus first on optimizing average performance
$$\expect{\Vbar}_P = \int{\Vbar(\rho)P(\rho)\diff\rho},$$
for an arbitrary measure $P(\rho)\diff\rho$.  (N.B.  In spite of this notation, the averaging measure $P(\rho)\diff\rho$ and volume measure $\diff\rho$ are \emph{completely} independent!  For example, Eq. \ref{eq:defPrD} doesn't depend on the choice of volume measure $\diff\rho$.)

A region estimator $\Rhat(\cdot)$ can be represented by a connection relation: for each dataset $D$ and state $\rho$, we say that $\rho$ is ``connected'' to $D$ ($\rho\sim D$) iff $\rho\in\Rhat(D)$.  The average expected volume is then given by
\begin{eqnarray}
\expect{\Vbar} &=& \int{\Vbar(\rho)P(\rho)\diff\rho} \nonumber \\
&=& \int{ \sum_D{Pr(D|\rho)V(\Rhat(D))}P(\rho)\diff\rho } \nonumber \\
&=& \sum_D{ Pr(D) V(\Rhat(D)) },
\end{eqnarray}
where 
\begin{equation}
Pr(D) \equiv \int{Pr(D|\rho)P(\rho)\diff\rho}\label{eq:defPrD}.
\end{equation}
Now, we recall Eq. \ref{eq:defVol} for the volume of a region, and observe that the integral over $\rho\in\Rhat(D)$ is equivalent to an integral over $\rho\sim D$, which gives
\begin{eqnarray}
\expect{\Vbar} &=& \sum_D{ Pr(D) \int_{\rho\sim D}{\diff\rho}} \nonumber \\
&=& \int_{\mathrm{all\ }\rho}{\left(\sum_{D\sim\rho}{ Pr(D) }\right)\diff\rho}. \label{eq:exVbar}
\end{eqnarray}
Now, recall that we want to minimize $\expect{\Vbar}$ (Eq. \ref{eq:exVbar}) subject to the constraint that
\begin{equation}\sum_{D\sim\rho}{Pr(D|\rho)} \geq \alpha\mathrm{\ for\ each\ }\rho. \label{eq:constraint}
\end{equation}
We can minimize each of the terms in the integral (Eq. \ref{eq:exVbar}) independently -- because they are not coupled by the constraint.  To do so, consider a simple cost/benefit analysis.  Eq. \ref{eq:constraint} says that each state $\rho$ must be connected to datasets whose total \emph{conditional} probability $Pr(D|\rho)$ is at least $\alpha$.  But Eq. \ref{eq:exVbar} says that each such connection comes at a cost, given by the \emph{unconditional} probability $Pr(D)$.  To achieve total conditional probability $\alpha$ at minimum cost, we connect $\rho$ to datasets in descending order of benefit/cost ratio, given by the \emph{probability ratio} statistic
\begin{equation}
r(D;\rho) = \frac{Pr(D|\rho)}{Pr(D)},\label{eq:PR}
\end{equation}
down to a threshold $r_\alpha(\rho)$ that satisfies
\begin{equation}
\sum_{\mathrm{all\ }D\mathrm{\ s.t.\ }r(D;\rho) > r_\alpha(\rho)}{Pr(D|\rho)} \geq \alpha.\label{eq:ralpha}
\end{equation}
Inverting this relationship (to define which states are connected to a given $D$, rather than the other way around) yields probability ratio (PR) region estimators:
	\begin{definition} The PR region estimator for an averaging measure $\mu = P(\rho)\diff\rho$, with confidence level $\alpha$, is defined by $\Rhat(D) = \{\mathrm{all\ }\rho{\ such\ that\ }Pr(D|\rho)/Pr(D) \geq r_\alpha(\rho)\},$ with $Pr(D)$ and $r_\alpha(\rho)$ given by Eqs. \ref{eq:defPrD},\ref{eq:ralpha}. \label{def:PR}
\end{definition}
This prescription is an exact solution to the problem of minimum-average-volume confidence regions -- and, as advertised, it does not depend on what measure $\diff\rho$ is used to defined volume!

PR estimators are interesting in themselves.  In particular, the estimator that unconditionally ``works best'' for $\rho=\rho_0$ (for any given $\rho_0$) is the PR estimator for the measure $\mu = \delta(\rho-\rho_0)$.  An especially confident experimentalist who believes that her apparatus really is producing $\rho_0$ might choose this estimator.  Despite the radical ``prior'', this estimator still assigns valid confidence regions, on which a skeptical third party can rely.  The experimentalist's extreme confidence is reflected only in this manner:  the datasets $D$ that typically occur when $\rho=\rho_0$ yield relatively small regions, while datasets $D$ that are improbable given $\rho=\rho_0$ (but might appear with high probability for other states) yield enormous regions.  Thus, if $\rho$ really is $\rho_0$, then the experimentalist is rewarded with (moderately) small regions\ldots but if she is wrong and $\rho$ is very different from $\rho_0$, then the assigned region will probably be so large as to imply virtually nothing about $\rho$.

So while the extreme choice $\mu = \delta(\rho-\rho_0)$ is a valid one, it's unwise in practice.  It \emph{does} play an important role by establishing an absolute (and tight) lower bound on $\Vbar(\rho_0)$.  But in practice, any sane experimentalist or analyst will choose a more balanced estimator -- one that performs well even in the worst case.  This (as noted above) is the minimax estimator, and is the PR estimator for the least favorable prior.  

Finding exact LFPs is arduous and tricky at best\cite{Evans}.  Worse yet, the LFP will depend (perhaps sensitively) on the exact volume measure $\diff\rho$.  In order to circumvent this task (which remains a good challenge for future research), let's apply a simple heuristic ansatz to choose $\mu$.

Suppose that we choose some $\mu$, and then $\rho$ is chosen by an adversary so as to maximize $\Vbar(\rho)$.  To do so, the adversary would look for a dataset $D_0$ and a state $\rho$ such that $Pr(D_0|\rho) \gg Pr(D_0)$ (the latter is determined by $P(\rho)\diff\rho$).  If $Pr(D_0)$ is relatively small, then its ``cost'' will be relatively low, and many states $\rho'$ will be connected to it -- which means that $\Rhat(D_0)$ will be large.  But because $Pr(D_0|\rho)$ is relatively large, $D_0$ will occur relatively often, and $\Vbar(\rho)$ will be large.

Avoiding this vulnerability is simple:  ensure that $Pr(D|\rho)/Pr(D)$ is not too large for any $\rho$.  This means choosing $\mu$ so that
\begin{equation}
Pr(D) \propto \max_\rho{Pr(D|\rho)}.
\end{equation}
With this choice, the probability ratio statistic (Eq. \ref{eq:PR}) becomes the \emph{likelihood ratio} statistic,
\begin{equation}
r = \frac{Pr(D|\rho)}{Pr(D)} \to \frac{Pr(D|\rho)}{\max_{\rho'}{Pr(D|\rho')}} = \frac{\cL(\rho)}{\cL_{\mathrm{max}}(\rho)} \equiv \Lambda,
\end{equation}
which defines likelihood-ratio (LR) regions (Definition \ref{def:LR}) as a special case of PR regions.  We use $\lambda = -2\log\Lambda$, rather than $\Lambda$, to maintain a convenient connection with [extensive] previous work on likelihood ratios.

\section{The threshold}

For a generic PR region estimator, the threshold value of the statistic ($r_\alpha$) depends significantly on $\rho$.  We could define LR regions in the same way, using a $\rho$-dependent threshold $\lambda_\alpha(\rho)$.  But one of the special attributes of the LR statistic is that, unlike generic PR statistics, its distribution is approximately independent of $\rho$ (as shown in Fig. \ref{fig:Cutoff}).  Proving this independence exactly depends on a Gaussian approximation that does not quite hold even as $N\to\infty$, but it does hold approximately.  So while \emph{exactly} optimal regions require calculating a $\rho$-dependent threshold numerically (see Fig. \ref{fig:Cutoff}), we can replace $\lambda_\alpha(\rho)$ with a constant lower bound $\lambda_\alpha$ satisfying
\begin{equation}
\lambda_\alpha \geq \lambda_\alpha(\rho)\mathrm{\ for\ all\ }\rho \label{eq:threshold1}
\end{equation}
and obtain a simpler and more elegant estimator that is still rigorously correct, and only sacrifices a little power.  This simplifies matters -- but we still need to set $\lambda_\alpha$!

\begin{figure}[t]
\begin{centering}
\includegraphics[width=8cm]{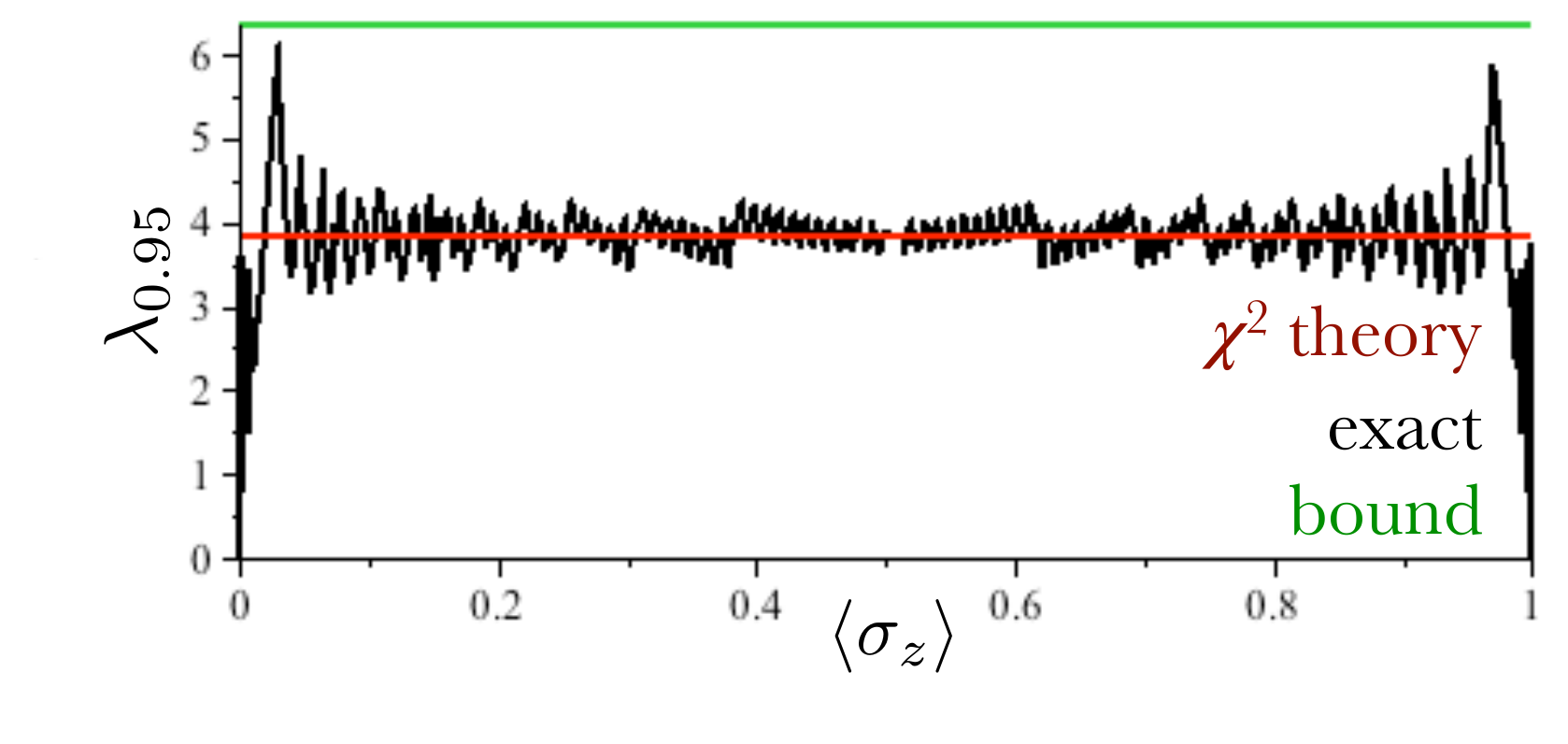}
\par\end{centering}
\caption{This figure shows the tight state-\emph{dependent} cutoff $\lambda_\alpha(\rho)$ for a particularly simple case:  a qubit measured only in the $\sigma_z$ basis (or, alternatively, a classical coin).  The cutoff depends only on $\expect{\sigma_z}$, so it can be plotted easily and compared to (i) the $\chi^2$ value, around which it fluctuates, and (ii) the upper bound given by Eq. \ref{eq:Fbound}.  While using the state-dependent cutoff would yield (slightly) smaller regions, there is a concomitant loss of simplicity, elegance, and convenience (because the regions become nonconvex).}
\label{fig:Cutoff}
\end{figure}

Coverage probability and region size both increase with $\lambda_\alpha$.  So we want to set it as low as possible (to ensure powerful regions) while maintaining coverage probability at least $\alpha$.  $\Rhat_\alpha$ will include $\rho$ if and only if $\lambda(\rho) < \lambda_\alpha$.  In principle, we can compute the distribution of $\lambda(\rho)$ for every possible $\rho$, define a \emph{complementary cumulative distribution function} (CCDF)
$$F(\lambda_\alpha|\rho) = Pr( \lambda(\rho) > \lambda_\alpha | \rho),$$
and then solve the equation $\max_\rho{ F(\lambda_\alpha|\rho) } = 1-\alpha$ for $\lambda_\alpha$.  In practice, computing $F$ is hard, so instead we use upper bounds on $F$ (as shown in Fig. \ref{fig:Threshold}) to set $\lambda_\alpha$, ensuring coverage probability $\geq\alpha$ at a small cost in power.  Two valid and useful (though not tight) bounds are given in Eq. \ref{eq:Fbound} (whose derivation is rather arduous, and will be published elsewhere) and Lemma \ref{lem:bound} (proven herein).

If the data had a Gaussian distribution, then $\lambda(\rho)$ would be a $\chi^2_k$ random variable, where $k$ (the number of degrees of freedom) is the number of linearly independent observables measured, and is equal to $d^2-1$ for an informationally complete set of measurements.  The corresponding CCDF is independent of the true state $\rho$, and given in terms of an upper incomplete Gamma function by
\begin{eqnarray}
F_{\chi^2_k}(\lambda_\alpha) &=& \frac{\gamma(k/2,\lambda_\alpha/2)}{\Gamma(k/2)} \label{eq:chi2}\\
&\approx&
\frac{1}{\left(\frac{k}{2}-1\right)!}\left(\frac{\lambda_\alpha}{2}\right)^{k/2-1}e^{-\lambda_\alpha/2},
\end{eqnarray}
where the second line is valid as $\lambda_\alpha\to\infty$.  Unfortunately, tomographic data are multinomial, not Gaussian, and this ansatz is too optimistic.  Using it yields a coverage probability that is $\alpha$ only on average, and can be much lower for some $\rho$.  A much more arduous calculation (to be published separately) yields an upper bound
\begin{eqnarray}
F(\lambda_\alpha) &\leq& F_{\chi^2_k} + e^{-\lambda_\alpha/2}\left[\left(1+\frac{\sqrt{3e\lambda_\alpha}}{\pi}\right)^{k} - \frac{\sqrt{3e\lambda_\alpha}}{\pi}^k\right] \nonumber\\
&\to& F_{\chi^2_k} + e^{-\lambda_\alpha/2}k\sqrt{\lambda_\alpha}^{k-1}\mathrm{\ when\ }\lambda_\alpha\to\infty, \label{eq:Fbound}
\end{eqnarray}
that is valid whenever the data $D$ are obtained from independent measurements on identically prepared copies of $\rho$ (the standard tomographic setup).  Figure \ref{fig:Threshold} compares these bounds with an exhaustive numerical calculation of $F(\lambda_\alpha)$ for the $k=3$ degrees of freedom found in single-qubit tomography.

\begin{figure}[t]
\begin{centering}
\includegraphics[width=8cm]{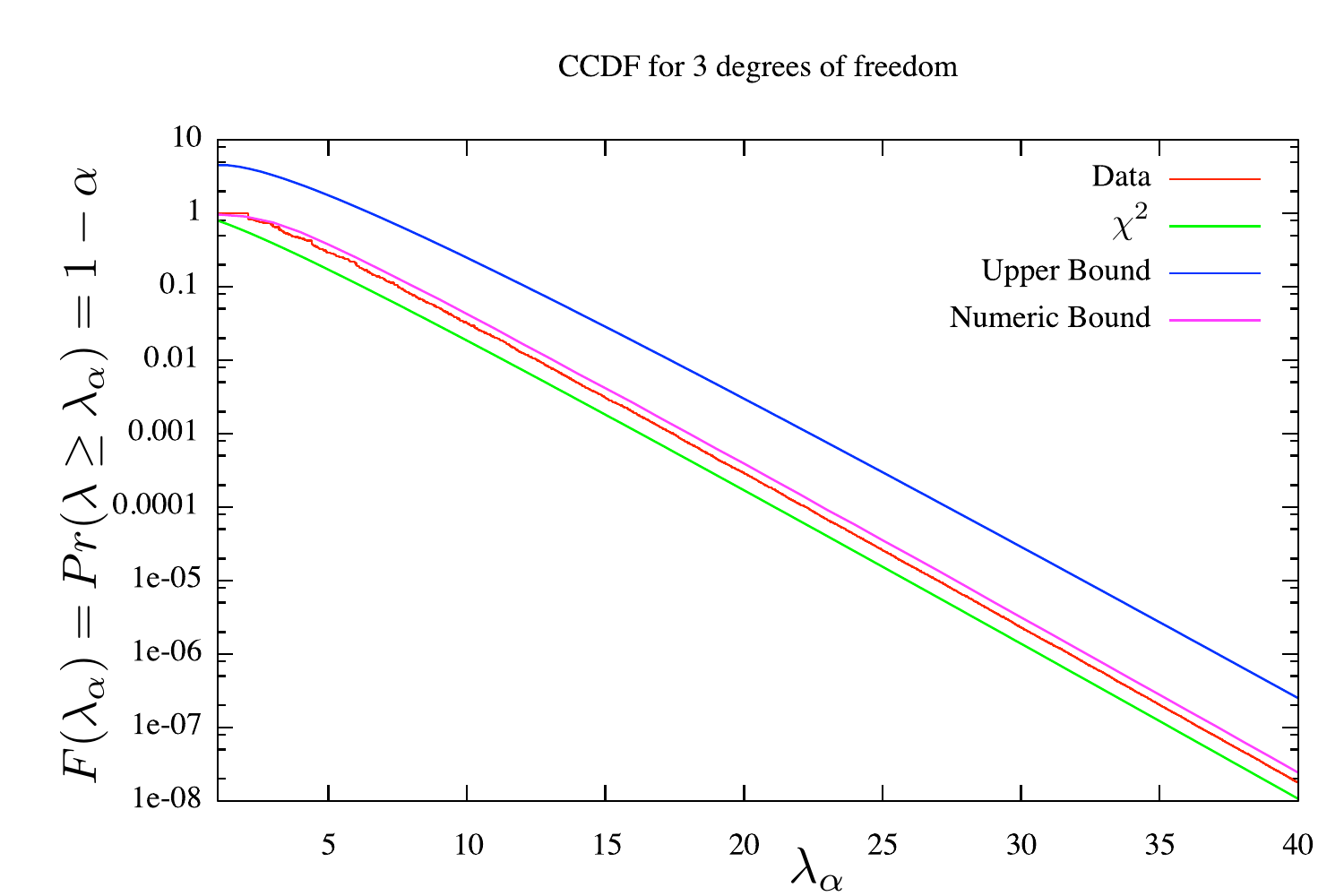}
\par\end{centering}
\caption{Three different bounds on the complementary cumulative distribution function or CCDF [$F(\lambda_\alpha)$] for the loglikelihood ratio derived from qubit state tomography (wherein $\rho$ has $K=3$ degrees of freedom).  The horizontal axis is the cutoff value $\lambda_\alpha$, while the vertical (log-scale) axis is the associated failure probability $1-\alpha = F(\lambda_\alpha)$.  ``\textbf{Data}'' correspond to an exhaustive numerical calculation of the CCDF, and are compared with the $\chi^2$ approximation (Eq. \ref{eq:chi2}) and the upper bound given in Eq. \ref{eq:Fbound}.  The plot labeled ``\textbf{Numeric bound}'' is a hybrid calculation where method used to derive Eq. \ref{eq:Fbound} is augmented by calculating one hard-to-approximate quantity numerically; its excellent agreement with data supports Eq. \ref{eq:Fbound}, and suggests that it can be improved.}
\label{fig:Threshold}
\end{figure}

A simpler (but looser) bound that applies to \emph{any} data, including joint measurements\footnote{This bound was inspired by Refs. \cite{Christandl,ChristandlPRL09}, and by discussions with Matthias Christandl, for whose help I am grateful!}, is
\begin{lemma} For any measurement on $N$ copies of a $d$-dimensional system, $F(\lambda_\alpha) \leq N^{d^2-1}e^{-\lambda_\alpha/2}$.
\label{lem:bound}
\end{lemma}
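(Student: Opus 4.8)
The plan is to bound the worst-case CCDF $F(\lambda_\alpha) = \max_\rho Pr(\lambda(\rho)>\lambda_\alpha|\rho)$ uniformly in $\rho$, so that taking the maximum at the very end costs nothing. First I would rewrite the threshold event. By Definition~\ref{def:LR}, $\lambda(\rho)>\lambda_\alpha$ holds precisely when $Pr(D|\rho) < e^{-\lambda_\alpha/2}\max_{\rho'}Pr(D|\rho')$. Summing the conditional probabilities of the offending datasets and pulling the uniform factor $e^{-\lambda_\alpha/2}$ outside the sum (then enlarging the index set to all $D$) gives
\[ F(\lambda_\alpha|\rho) = \sum_{D:\lambda(\rho)>\lambda_\alpha} Pr(D|\rho) \;\leq\; e^{-\lambda_\alpha/2}\sum_{D}\max_{\rho'}Pr(D|\rho'). \]
This reduces the entire lemma to controlling a single $\rho$-independent quantity, the total maximized likelihood $\sum_D\max_{\rho'}Pr(D|\rho')$ (the ``Shtarkov sum'' of universal coding). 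Everything now hinges on showing this sum is at most $N^{d^2-1}$.

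Next I would evaluate that sum operationally. Writing each likelihood as $Pr(D|\rho') = \Tr\!\left(E_D\,(\rho')^{\otimes N}\right)$ for the POVM $\{E_D\}$ describing the (possibly collective) measurement, the crux is the operator inequality $(\rho')^{\otimes N} \leq P_{\mathrm{sym}}$, where $P_{\mathrm{sym}}$ projects onto the totally symmetric subspace of $(\mathbb{C}^d)^{\otimes N}$. This holds because $(\rho')^{\otimes N}$ is supported on that subspace and has largest eigenvalue $\|\rho'\|_\infty^N \leq 1$. Since each $E_D \geq 0$, this bounds every term, $\max_{\rho'}\Tr\!\left(E_D\,(\rho')^{\otimes N}\right) \leq \Tr\!\left(E_D\,P_{\mathrm{sym}}\right)$, uniformly in $\rho'$; summing over $D$ and invoking POVM completeness $\sum_D E_D = \Id$ collapses the sum to
\[ \sum_D\max_{\rho'}Pr(D|\rho') \;\leq\; \Tr\!\left(P_{\mathrm{sym}}\right) \;=\; \binom{N+d-1}{d-1}. \]
A short estimate of this binomial, the dimension of the symmetric subspace, then finishes the job, since $\binom{N+d-1}{d-1}\leq N^{d^2-1}$ for $N\geq 2$; combined with the first display and $F(\lambda_\alpha)=\max_\rho F(\lambda_\alpha|\rho)$, the claim follows.

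I expect the operator inequality $(\rho')^{\otimes N}\leq P_{\mathrm{sym}}$ together with the POVM-completeness collapse to be the conceptual heart of the argument; once those are in hand the rest is bookkeeping. This is exactly the symmetric-subspace device underlying the de Finetti and post-selection estimates of Christandl and Renner~\cite{Christandl,ChristandlPRL09}, which is why the bound survives \emph{arbitrary} collective measurements rather than only independent ones. Two caveats are worth flagging. The symmetric-subspace bound is in fact far sharper than the stated $N^{d^2-1}$, scaling like $N^{d-1}$; I would present the clean $\binom{N+d-1}{d-1}$ and simply remark that it is dominated by $N^{d^2-1}$. An alternative, more elementary route that reproduces the exponent $d^2-1$ directly replaces $\max_{\rho'}$ by a maximum over a net of $\sim N^{d^2-1}$ states covering the $(d^2-1)$-dimensional state manifold and applies the same factor-of-$e^{-\lambda_\alpha/2}$ estimate to each net point via a union bound; there the main obstacle is controlling the net-approximation error, i.e.\ showing the likelihood varies slowly enough between adjacent points that refining the net to scale $\sim 1/N$ inflates the bound by only the advertised polynomial factor.
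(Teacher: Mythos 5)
Your first step is exactly the paper's: the event $\lambda(\rho)>\lambda_\alpha$ is equivalent to $Pr(D|\rho)<e^{-\lambda_\alpha/2}\max_{\rho'}Pr(D|\rho')$, so the tail probability is bounded by $e^{-\lambda_\alpha/2}$ times the Shtarkov sum $\sum_D\max_{\rho'}Pr(D|\rho')$, uniformly in $\rho$. The gap is in how you bound that sum. The operator inequality $(\rho')^{\otimes N}\leq P_{\mathrm{sym}}$ is false for mixed $\rho'$: only \emph{pure} product powers $\proj{\psi}^{\otimes N}$ live in the totally symmetric subspace. A generic $(\rho')^{\otimes N}$ is permutation-\emph{invariant} (it commutes with the $S_N$ action) but its support spans all of $\cH^{\otimes N}$ when $\rho'$ is full rank -- take $\rho'=\Id/d$, for which $(\rho')^{\otimes N}=\Id/d^N$ manifestly has weight outside the symmetric subspace. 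Consequently your bound $\sum_D\max_{\rho'}Pr(D|\rho')\leq\Tr(P_{\mathrm{sym}})=\binom{N+d-1}{d-1}\sim N^{d-1}$ cannot be right; indeed it already fails for ordinary product measurements on a qubit, where the Shtarkov sum scales like $N^{3/2}$ (roughly $N^{1/2}$ per independent parameter), exceeding $N^{d-1}=N$. The fact that your bound came out ``far sharper'' than the lemma's $N^{d^2-1}$ was the warning sign: the exponent $d^2-1$ is not slack, it is the number of real parameters of a \emph{mixed} state, and no pure-state (symmetric-subspace) device can see it.

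The repair is the route the paper actually takes: use the full Schur--Weyl decomposition $\cH^{\otimes N}=\bigoplus_\nu\cU_\nu\otimes\cP_\nu$. Permutation invariance of $\rho^{\otimes N}$ implies it is maximally mixed on each $\cP_\nu$ factor, so every collective measurement is equivalent to a POVM $\{E'_k\}$ on $\bigoplus_\nu\cU_\nu$, whose dimension is at most $M=N^{d^2-1}$ (this is the postselection-technique dimension count of Christandl, K\"onig and Renner, and it replaces your $\Tr(P_{\mathrm{sym}})$). Then $\max_{\rho'}Pr(k|\rho')\leq\Tr(E'_k)$ because any density operator is $\leq\Id$, and completeness gives $\sum_k\Tr(E'_k)=M$, which is precisely the Shtarkov-sum bound you need. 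Your alternative net-based sketch would also recover the exponent $d^2-1$, but as you note it leaves the net-approximation error uncontrolled, so as written neither of your two routes closes the argument.
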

\begin{proof} 
A POVM measurement $\cM = \{E_k\}$ (where $E_k\geq0$ and $\sum_k{E_k}=\Id$) is performed on the state $\rho^{\otimes N}$, on the $N$-copy Hilbert space $\cH^{\otimes N}$.  By Schur's Lemma, we may decompose the Hilbert space as $\cH^{\otimes N} = \bigoplus_{\nu}{\cU_\nu\otimes\cP_\nu}$, where $\cU_\nu$ and $\cP_\nu$ are irreducible representation spaces of the unitary group $SU(d)$ and the permutation group $S_N$ (respectively).  Because the state is permutation-symmetric, it is maximally mixed on the $\cP_\nu$ factors, and therefore this measurement is equivalent to a measurement $\cM' = \{E'_k\}$ on the much smaller Hilbert space $\bigoplus_\nu{\cU_\nu}$, whose dimension is at most $M = N^{d^2-1}$.

Next, let the probability of event $k$ given $\rho$ be $p_k$, and the \emph{maximum} probability of event $k$ be $q_k$, so that $\lambda = -2\log( p_k / q_k )$.  Rewriting this gives $p_k = q_ke^{\lambda/2}$.  Now, $q_k\leq Tr(E'_k)$, and $\sum_k{E'_k} = \Id$, so $\sum_k{q_k} \leq M$.  The probability that $\lambda \geq \lambda_\alpha$ is $Pr( \lambda \geq \lambda_\alpha ) = \sum_{\mathrm{all\ }k\mathrm{\ s.t.\ }\lambda \geq \lambda_\alpha}{p_k}$.  Each term is $p_k = q_k e^{-\lambda/2} \leq q_k e^{-\lambda_\alpha/2}$, so the sum is upper bounded by $Me^{-\lambda_\alpha/2} = N^{d^2-1}e^{-\lambda_\alpha/2}$.  
\end{proof}
Equation \ref{eq:Fbound} does not grow with the number of samples measured ($N$), so it will generally be much tighter than Lemma \ref{lem:bound} -- but is harder to derive.  And in many cases Lemma \ref{lem:bound} may be fine; because $F(\lambda_\alpha)$ decreases exponentially in all cases, the $N^{d^2-1}$ factor will enlarge the region's size by at most $\mathrm{polylog(N)}$.

Since these bounds are loose, their use will produce confidence regions that are somewhat larger than necessary.  We will want to know \emph{how} excessively large they are!  Fortunately, such a tool is ready to hand.  The $\chi^2$ approximation to $F(\lambda_\alpha)$, given in Eq. \ref{eq:chi2}, gives a \emph{lower} bound\footnote{It's worth being precise here.  Eq. \ref{eq:chi2} is not a lower bound on $F()$.  Instead, the true exact $F()$ depends on $\rho$, and fluctuates above and below the $\chi^2$ approximation, as shown in Fig. \ref{fig:Cutoff}.  Since $F()$ is greater than the $\chi^2$ approximation for some $\rho$, any state-independent upper bound on $F()$ will be strictly greater than the $\chi^2$ approximation, so Eq. \ref{eq:chi2} is a lower bound on such upper bounds.} on $\lambda_\alpha$.  This suggests a simple test:
\begin{enumerate}
\item Define a confidence region $\Rhat(D)$ using a value of $\lambda_\alpha$ obtained by solving Eq. \ref{eq:Fbound} or the equation in Lemma \ref{lem:bound}.
\item Define an ``inner bound'' region $\Rhat_{\mathrm{min}}(D)$ using $\lambda_\alpha$ obtained by solving Eq. \ref{eq:chi2}.
\item If $\Rhat(D)$ and $\Rhat_{\mathrm{min}}(D)$ are relatively similar (i.e., $\Rhat(D)$ is not much bigger, and they lead to similar conclusions about the experiment) then there's nothing to be gained from using a tighter bound.
\end{enumerate}

\section{Discussion}

Likelihood-ratio confidence regions define ``error bars'' for quantum tomography that are:
\begin{enumerate}
\item Rigorously guaranteed to capture the true state (or process) with controllable probability $\alpha$,
\item Approximately as small as can be achieved (within that constraint),
\item Natural, convenient, and intuitive.
\end{enumerate}
The third point summarizes several particularly nice properties of LR regions.  They are convex for standard tomographic data (because the log likelihood is convex), so they can be manipulated and characterized using convex programming.  Determining whether a given state $\rho$ lies in $\Rhat(D)$ is even easier -- just compare $\cL(\rho)$ to $\cL_{\mathrm{max}}$.  And LR regions are, in a sense, a natural generalization of the popular maximum-likelihood (ML) point estimator.  This view is actually backward;  whereas ML point estimators have no finite-sample optimality properties, and may yield pathological results for quantum tomography, LR regions form a provably near-optimal region estimator.  So a more accurate view is that the near-optimality of LR regions explains why ML point estimators often work well:  the true state is usually in $\Rhat_{\alpha}(D)$, and $\Rhat_{\alpha}(D)$ is usually a neighborhood of $\rhoMLE$.

Unlike the ellipsoidal regions implied by error bars, or the spherical ones implied by large deviation bounds\cite{SugiyamaPRA11,Liu11}, LR regions have shapes that are variable and data-adapted (see Fig. \ref{fig:Examples}).  This is a virtue -- they can be much smaller than the best region of a fixed shape.  But it can also be inconvenient.  Many questions about $\rho$ can be answered directly using the simple implicit description of $\Rhat(D)$ and convex programming (e.g. ``Is $\rho$ definitely separable?'' or ``What values of $\expect{X}$ can be ruled out?'').  Sometimes, though, an explicit description of $\Rhat(D)$ is required.  One can be produced with reasonable efficiency by sampling from the surface and calculating a minimum-volume bounding ellipsoid \cite{KumarJOTA05} or hypersphere \cite{FischerLNCS03}.  This trades power (the approximated region is larger) for convenience.

Most of the error bars or regions used to date in quantum tomography are based on standard errors -- i.e., the variance of a point estimator, usually the maximum likelihood estimator.  As discussed in the introduction, such regions are not reliable (they may work in many cases, but not all!).  However, a rigorously reliable solution was proposed\cite{Christandl} quite recently by Christandl and Renner.  This excellent result deserves some discussion here.  Although it was obtained entirely independently of the current work\footnote{The authors of Ref. \cite{Christandl} and I became aware of each others' work only when our simultaneous submissions to the QIP 2012 conference were accepted and merged by the program committee.  The present paper is \emph{almost} independent of our subsequent collaboration; the exception is Lemma \ref{lem:bound}, which was proved by myself but directly inspired by conversations with Matthias Christandl at ETH-Zurich.}, it addresses a very similar problem and arrives at a solution that (in some ways) is closely related\ldots via remarkably different methods!

Their main result is that confidence regions with confidence $\alpha$ can be constructed by (i) constructing a Bayesian \emph{credible} region (a region containing $\alpha'\approx1$ of the posterior probability) for a particular prior (Hilbert-Schmidt measure over quantum states), then (ii) enlarging the region slightly in a particular way.  This proof elegantly synthesizes Bayesian and frequentist notions, by constructing regions that are simultaneously confidence regions \emph{and} credible regions (at least for a particular prior).  It suggests that the two approaches -- while philosophically orthogonal -- are deeply related in some fashion.

On the other hand, the method in Ref. \cite{Christandl} is not explicit.  That is, it does not suggest \emph{which} high-posterior-probability region to report.  The natural choice is to choose the smallest such region, but it is not immediately obvious how to identify it.  Moreoever, there is no obvious way to determine how powerful this procedure is -- i.e., whether it assigns regions that are, in some sense, smaller than those assigned by most or all other estimators.

These are exactly the strengths of the likelihood-ratio method given here.  Definition \ref{def:LR} defines a specific, simple, and straightforward protocol, and we can easily analyze the expected volume of LR regions and show that they are optimal in a strong sense.  The resulting regions are also relatively easy to characterize using known properties of the likelihood function (e.g., for standard tomographic data it is convex, and so are the LR regions).  On balance, the LR method given here seems more practically useful at the present time.  However, it's worth noting that the posterior distribution used to define regions in Ref. \cite{Christandl} is \emph{very} closely related to the likelihood $\cL(\rho)$ (thanks to Bayes' Rule).  So, under most circumstances, regions assigned through a sensible interpretation of Ref. \cite{Christandl} will be quite similar to LR regions (albeit perhaps a bit larger).  This suggest that further research may synthesize both approaches into a single, uniquely satisfying definition of ``error bars''.

\vspace{0.1in}\noindent\textbf{ACKNOWLEDGEMENTS}:  This work was supported by the US Department of Energy through the LANL/LDRD program.  I am grateful for helpful conversations over a long period of time with colleagues including (but not limited to) Chris Ferrie, Josh Coombs, Steve Flammia, Carl Caves, and Aephraim Steinberg.  I am especially indebted to Matthias Christandl and Renato Renner for discussions about their closely related work, which inspired Lemma 1.


\end{document}